\def\BibTeX{{\rm B\kern-.05em{\sc i\kern-.025em b}\kern-.08em
		T\kern-.1667em\lower.7ex\hbox{E}\kern-.125emX}}
\DeclareMathOperator*{\argmax}{arg\,max}
\DeclareMathOperator*{\argmin}{arg\,min}
\newcommand\abs[1]{\left|#1\right|}
\newcommand\norm[1]{\left\lVert #1\right\rVert}
\newcommand{\bp}{\boldsymbol{p}}
\newcommand{\bcc}{\boldsymbol{c}}
\newcommand{\by}{\boldsymbol{y}}
\newcommand{\bmu}{\boldsymbol{\mu}}
\newcommand{\tmu}{\tilde{{\mu}}}
\newcommand{\tbmu}{\tilde{\boldsymbol{\mu}}}
\newcommand{\balp}{{\boldsymbol {\alpha}}}
\newcommand{\bbalp}{\bar{\boldsymbol {\alpha}}}
\newcommand{\bbp}{\bar{\boldsymbol{p}}}
\newcommand{\bb}{\boldsymbol{b}}
\newcommand{\bw}{\boldsymbol{w}}
\newcommand{\bbet}{\bar{\bet}}
\newcommand{\bet}{\boldsymbol{\eta}}
\newcommand{\bA}{\boldsymbol{A}}
\newcommand{\bB}{\boldsymbol{B}}
\newcommand{\bep}{\boldsymbol{\epsilon}}
\newcommand{\rmx}{\mathrm{x}}
\newcommand{\rmy}{\mathrm{y}}
\newcommand{\rmz}{\mathrm{z}}
\newcommand{\ba}{\boldsymbol{a}}
\newcommand{\bu}{\boldsymbol{u}}
\newcommand{\projrange}[1]{\boldsymbol{\Pi}_{#1}}
\theoremstyle{plain}
\newtheorem{thm}{Theorem}
\theoremstyle{remark}
\newtheorem{rem}[thm]{Remark}
\theoremstyle{plain}
\newtheorem{lemma}{Lemma}
\acrodef{RIS}{reconfigurable intelligent surface}
\acrodef{SNR}{signal-to-noise ratio}
\acrodef{ISAC}{integrated sensing and communication}
\acrodef{ISLAC}{integrated sensing, localization, and communication}
\acrodef{LoS}{line-of-sight}
\acrodef{AoA}{angle-of-arrival}
\acrodef{AoD}{angle-of-departure}
\acrodef{UE}{user equipment}
\acrodef{BS}{base station}
\acrodef{MCRB}{misspecified Cram\'{e}r-Rao bound}
\acrodef{LB}{lower bound}
\acrodef{MML}{mismatched maximum likelihood}
\pgfplotsset{
	tick label style={font=\small},
	label style={font=\small},
	legend style={font=\small}
}
\tikzset{naming/.style={align=center,font=\small}}
\tikzset{antenna/.style={insert path={-- coordinate (ant#1) ++(0,0.25) -- +(135:0.25) + (0,0) -- +(45:0.25)}}}
\tikzset{station/.style={naming,draw,shape=dart,shape border rotate=90, minimum width=10mm, minimum height=10mm,outer sep=0pt,inner sep=3pt}}
\tikzset{mobile/.style={naming,shape=circle,draw, minimum width=4mm,minimum height=1mm, outer sep=0pt,inner sep=3pt}}
\tikzset{risblockage/.style={naming,fill = red!30, very thick, shape=rectangle,minimum width=5mm,minimum height=10mm, outer sep=0pt,inner sep=3pt}}
\tikzset{radiation/.style={{decorate,decoration={expanding waves,angle=90,segment length=4pt}}}}
\newcommand{\UE}[1]{%
\begin{tikzpicture}[every node/.append style={circle,minimum width=0pt}]
\node[mobile] (box) {#1};
\end{tikzpicture}
}
\newcommand{\RISBlockage}[1]{%
	\begin{tikzpicture}
		\node[risblockage] (box) {#1};
	\end{tikzpicture}
}
\newcommand{\BS}[1]{%
\begin{tikzpicture}
\node[station] (base) {#1};

\draw[line join=bevel] (base.100) -- (base.80) -- (base.110) -- (base.70) -- (base.north west) -- (base.north east);
\draw[line join=bevel] (base.100) -- (base.70) (base.110) -- (base.north east);

\draw[line cap=rect] ([yshift=0pt]base.north) [antenna=1];
\end{tikzpicture}
}
\newcommand{\RIS}[1]{%
	\begin{tikzpicture}
		\draw[step=0.125,thick, draw = blue] (0,0) grid (1.5,1.5);
	\end{tikzpicture}
}
\begin{document}
	\bstctlcite{IEEEexample:BSTcontrol}
	
	\title{On the Impact of Hardware Impairments on RIS-aided Localization}
	\author{C\"{u}neyd \"{O}zt\"{u}rk\IEEEauthorrefmark{1}, Musa Furkan Keskin\IEEEauthorrefmark{2}, Henk Wymeersch\IEEEauthorrefmark{2}, Sinan Gezici\IEEEauthorrefmark{1}\\
	\IEEEauthorrefmark{1}Department of Electrical
and Electronics Engineering, Bilkent University, Turkey\\
\IEEEauthorrefmark{2}Department of Electrical
 Engineering, Chalmers University of Technology, Sweden}
	\maketitle
	\begin{abstract}
	We investigate a reconfigurable intelligent surface (RIS)-aided near-field localization system with single-antenna user equipment (UE) and base station (BS) under hardware impairments by considering a practical phase-dependent RIS amplitude variations model. To analyze the localization performance under the mismatch between the practical model and the ideal model with unit-amplitude RIS elements, we employ the misspecified Cram\'{e}r-Rao bound (MCRB). Based on the MCRB derivation, the lower bound (LB) on the mean-squared error for estimation of UE position is evaluated and shown to converge to the MCRB at low signal-to-noise ratios (SNRs). Simulation results indicate more severe performance degradation due to the model misspecification with increasing SNR. In addition, the mismatched maximum likelihood (MML) estimator is derived and found to be tight to the LB in the high SNR regime. Finally, we observe that the model mismatch can lead to an order-of-magnitude localization performance loss at high SNRs.
	\end{abstract}
	\begin{IEEEkeywords} Localization, intelligent surfaces, hardware impairments. 
	\end{IEEEkeywords}

	\section{Introduction}\label{sec:Intro}
	Among the envisioned technological enablers for 6G, three stand out as being truly disruptive: the transition from 30 GHz to beyond 100 GHz (the so-called higher mmWave and lower THz bands) \cite{saad2019vision,rappaport2019wireless,tataria20216g}, the convergence of communication, localization, and sensing (referred to as \ac{ISAC} or \ac{ISLAC}) \cite{chiriyath2017radar,liu2021integrated,de2021convergent,wymeersch2021integration,JCS_2021}, and the introduction of \acp{RIS} \cite{RIS_tutorial_2021,elzanaty20216g,wymeersch2020radio}. RISs are large passive surfaces, comprising arrays of metamaterials, and have the ability to shape the propagation environment, thus locally boosting the \ac{SNR} to improve communication quality \cite{RIS_Access_2019,RIS_commag_2021}. This is especially relevant in beyond 100 GHz to overcome sudden drops in rate caused by temporary blockage of the \ac{LoS} path. 
	
	In parallel with the benefits for communications, \acp{RIS} can similarly improve localization performance \cite{RIS_2018_TSP}. Stronger even, \acp{RIS} with known position and orientation have the ability to enable localization in scenarios where it would otherwise be impossible \cite{Keykhosravi2020_SisoRIS}. In this respect, the large aperture of the \ac{RIS} has several interesting properties. First of all, the \ac{SNR}-boosting provides accurate delay measurements when wideband signals are used \cite{RIS_bounds_TSP_2021,rappaport2019wireless}. Secondly, the large number of elements provides high resolution in \ac{AoA} (for uplink localization) or \ac{AoD} (for downlink localization) \cite{rappaport2019wireless}. Third, when the \ac{UE} is close to the \ac{RIS} (in the sense that the distance to the \ac{RIS} is of similar order as the physical size of the \ac{RIS}), wavefront curvature effects (so-called geometric near-field) can be harnessed to localize the user \cite{RIS_2018_TSP,Shaban2021,EM_wavefront,rahal2021risenabled}, even when the \ac{LoS} path between the \ac{BS} and \ac{UE} is blocked, irrespective of whether wideband or narrowband signals are used. 
	\begin{figure}
	\centering
	    \begin{tikzpicture}[every path/.append style={thick}]
            \matrix[column sep=.75cm,row sep=.75cm]
            {
            & \node(C) [label={below:{\textbf{RIS}}}]{\RIS{}}; &  &  \\
            \node[every path/.append style={thick},inner sep=0pt](A){\BS{\textbf{BS}}}; & \node(D){\RISBlockage{LoS \\Blockage}}; &   &\node[thick](B){\UE{\textbf{UE}}}; \\
            };
            \draw[thick,radiation,decoration={angle=45}] (A.north) -- +(45:0.5);
            \draw[thick] (A.north)--(C.center);
            \draw[thick] (C.center)--([xshift = -.1768cm, yshift = -.1768cm]B.north);
            \draw[thick,radiation,decoration={angle=45}] ([xshift = -.6cm, yshift = .2768cm]B.north)-- + (-45:0.5);
        \end{tikzpicture}
        \caption{Configuration of a RIS-aided localization system with LoS blockage.}
        \vspace{-0.5cm}
         \label{fig:RISconfiguration}
	\end{figure}
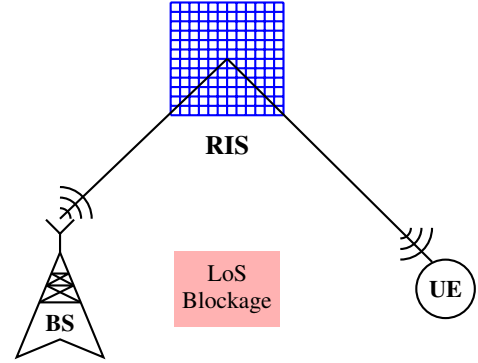
	Critical to the use of \ac{RIS} is the control of the \ac{RIS} elements, commonly through phase shifters. These phase shifters provide element-by-element control with a certain resolution, and allow the designer to coherently add up paths to or from the \ac{RIS} \cite{RIS_sumrate_2020,RIS_commag_2021}. For localization, in contrast to communication, the receiver should know the values of the \ac{RIS} phase profiles to apply suitable high-accuracy processing methods \cite{wymeersch2020radio}. The ability to modulate the \ac{RIS} phase profiles has additional benefits, such as separating the controlled and uncontrolled multipath through temporal coding \cite{keykhosravi2021multiris}. Hence, the ability to control in a precise and known manner is essential for \ac{ISLAC} applications, which necessitates the availability of accurate and simple \ac{RIS} phase control models. Such models should ideally account for the per-element response \cite{abeywickrama2020intelligent}, the finite quantization of the control \cite{RIS_phase_quantization_2021,RIS_commag_2021}, mutual coupling \cite{RIS_coupling_2021}, calibration effects, as well as power losses. Most studies on \ac{RIS} localization have considered ideal phase shifters (e.g. \cite{Keykhosravi2020_SisoRIS,RIS_bounds_TSP_2021,RIS_loc_2021_TWC,Shaban2021,rahal2021risenabled}), ignoring the listed impairments. How these proposed localization studies fare under these impairments is both unknown and important. 
	
	In this paper, we study the impact of realistic \ac{RIS} control models \cite{abeywickrama2020intelligent} in a geometric near-field scenario with \ac{LoS} blockage and a single-antenna \ac{UE} and \ac{BS}. In particular, we employ the \ac{MCRB} \cite{Fortunati2017} as a tool to assess the loss of localization performance under model mismatch, where the algorithm assumes an ideal phase control model, different from the real phase control \cite{abeywickrama2020intelligent}. Our contributions are as follows: \emph{(i)} we provide a simple expression to find the pseudo-true parameter for the considered scenario; \emph{(ii)} we derive the \ac{MCRB} of the pseudo-true parameter and the \ac{LB} of the true parameter; \emph{(iii)} we evaluate the \ac{MCRB} and \ac{LB} and compare with the \ac{MML} estimator, showing that at high \ac{SNR}, the model mismatch can lead to an order-of-magnitude localization performance degradation, both in terms of the \ac{LB} and the matching \ac{MML}. In contrast, when the true phase control model is available, localization performance is relatively stable, for all considered model parameter settings.

	\section{System Model}\label{sec:System}
	In this section, we describe the system geometry, present the signal model and the RIS phase shift model, and formulate the problem of interest.

	\subsection{Geometry and Signal Model}
	We consider a RIS-aided localization system (see Fig.~\ref{fig:RISconfiguration}) with a single-antenna BS, an $M$-element RIS, and a single-antenna UE having the following three-dimensional locations: $\bp_{\text{BS}}$ denotes the known BS location, $\bp_{\text{RIS}}= [\rmx_{\text{R}}\, \rmy_{\text{R}}\, \rmz_{\text{R}}]^{\intercal}$ is the known center of the RIS, $\bp_{m} = [\rmx_m\, \rmy_m \, \rmz_m]^{\intercal}$ represents the known location of the $m$th RIS element for $1\leq m\leq M$, and $\bbp = [\rmx_\text{UE}\, \rmy_\text{UE}\, \rmz_\text{UE}]^{\intercal}$ is the unknown UE location.  
	
	Without loss of generality, the BS broadcasts a narrowband signal $s_t$ over $T$ transmissions under the constraint of $\mathbb{E}\{\abs{s_t}^2\} = E_s$. Assuming \ac{LoS} blockage and the absence of uncontrolled multipath, the signal received by the UE involves only reflections from the RIS and can be expressed at transmission $t$ as 
	\begin{equation}\label{eq_yt}
		y_t  = \bar{\alpha}\underbrace{\ba^{\intercal}(\bbp) \text{diag}(\bw_t) \ba(\bp_{\text{BS}})}_{\triangleq \bb^{\intercal}(\bbp)\bw_t}s_t + n_t\,, 
	\end{equation}
	where $\bar{\alpha}$ is the unknown channel gain, $\bw_t = [w_{t,1}\, \ldots\, w_{t,M}]^{\intercal}$ is the RIS phase profile at transmission $t$, and $n_t$ is uncorrelated zero-mean additive Gaussian noise with variance $N_0/2$ per real dimension. 
	Moreover, $\bb(\bbp) = \ba(\bbp) \odot \ba(\bp_{\text{BS}})$, where $\ba(\bp)$ is the near-field RIS steering vector for a given position $\bp$, defined as
	$[\ba(\bp)]_{m} = \exp(-j {2\pi}(\norm{\bp-\bp_m}-\norm{\bp-\bp_{\text{RIS}}})/{\lambda})$, 
	for $m\in\{1, \ldots, M\}$, in which $\lambda$ denotes the signal wavelength. 
	
	\subsection{RIS Phase Shift Model}
	Following the practical phase shift model in \cite{abeywickrama2020intelligent}, we consider \textit{phase-dependent amplitude variations} of the RIS elements, given by 
	\begin{align}\label{eq_wtm_ris}
		w_{t,m} = \beta(\theta_{t,m}) e^{j\theta_{t,m}},
	\end{align}
	with $\theta_{t,m}\in [-\pi, \pi)$ and $\beta(\theta_{t,m})\in[0,1]$ denoting the phase shift and the corresponding amplitude, respectively. In particular, $\beta(\theta_{t,m})$ is expressed as
	\begin{equation}\label{eq_beta_model}
		\beta(\theta_{t,m}) = (1-\beta_{\text{min}})\left(\frac{\sin(\theta_{t,m}-\phi) + 1}{2}\right)^{\kappa} + \beta_{\text{min}},
	\end{equation}
	where $\beta_{\text{min}}\geq 0$, $\phi\geq 0$, and $\kappa\geq 0$ are the constants related to the specific circuit implementation \cite{abeywickrama2020intelligent}. 
	
	\subsection{Problem Description}
	Given the observations in $y_t$ from \eqref{eq_yt} over $T$ transmission instances, our goal  is to derive theoretical performance bounds for estimating the position of the UE under mismatch between the \textit{true model} in \eqref{eq_wtm_ris} and \textit{the assumed model} with $\tilde{w}_{t,m}= \exp({j\theta_{t,m}})$ (which is equivalent to assuming $\beta_{\text{min}} = 1$). 
	In other words, we aim to quantify localization performance loss due to this model mismatch resulting from the RIS hardware impairment specified in \eqref{eq_wtm_ris}. To that end, we will resort to the MCRB analysis \cite{Fortunati2017, Fortunati2018Chapter4,MCRB_TSP_2021,MCRB_delay_ICASSP_2020}, as discussed in the next section.

	\section{Misspecified Cram\'{e}r-Rao Bound Analysis}
	In this section, we introduce several notations and definitions, including the \ac{MCRB} and \ac{LB}, before deriving the \ac{MCRB} for our scenario. 
	\subsection{Preliminaries}
	
	The likelihood functions under the true and assumed models are given, respectively, by
	\begin{align}\label{eq:truedpf}
		p(\by| \bet) & = \frac{1}{(\pi N_0)^T} e^{-\frac{1}{N_0} \norm{\by-\bmu(\bet)}^2},\\
		\tilde{p}(\by|\bet)  & = \frac{1}{(\pi N_0)^T} e^{-\frac{1}{N_0} \norm{\by-\tilde{\bmu}(\bet)}^2},\label{eq:assumedpdf}
	\end{align}
	where $\by \triangleq [y_1\, \ldots \,y_T]^{\intercal}$, $\bmu(\bet) \triangleq [\mu_1(\bet)\, \ldots \, \mu_T(\bet)]^{\intercal}$, and $\tbmu(\bet) \triangleq [\tmu_1(\bet)\, \ldots \, \tmu_T(\bet)]^{\intercal}$. Also, the noise-free observations under the true and assumed models are
	\begin{align}
		\mu_t(\bet) &  \triangleq {\alpha} \sum_{m=1}^{M} [\bb(\bp)]_m w_{t,m} s_t,  \label{eq:mut}\\
		\tmu_t(\bet) & \triangleq  {\alpha} \sum_{m=1}^{M} [\bb(\bp)]_m \tilde{w}_{t,m} s_t,  \label{eq:mut_til}
	\end{align}
	where $w_{t,m}$ is defined in \eqref{eq_wtm_ris}, while under the assumed model $\tilde{w}_{t,m}= \exp({j\theta_{t,m}})$. 
	
	\subsection{MCRB Definition}
	We first introduce the pseudo-true parameter  
	\begin{equation}\label{eq:eta0}
		\bet_0 = \argmin_{\bet\in\mathbb{R}^{5}} D\left(p(\by|\bbet)\Vert \tilde{p}(\by| \bet)\right),
	\end{equation}
	where $ D\left(p(\by|\bbet)\Vert \tilde{p}(\by|\bet)\right)$ denotes the Kullback-Leibler (KL) divergence between the distributions $p(\by|\bbet)$ and $\tilde{p}(\by|\bet)$. 
	Next, let  $\hat{\bet}(\by)$ be a misspecified-unbiased (MS-unbiased) estimator of $\bbet$, i.e., the mean of the estimator $\hat{\bet}(\by)$ under the true model is equal to $\bet_0$. Then, the \ac{MCRB} is a lower bound for the covariance matrix of any MS-unbiased estimator of $\bbet$, $\hat{\bet}(\by)$ \cite{Fortunati2017, Fortunati2018Chapter4, Ricmond2015MCRB}:
	\begin{align}
		&\mathbb{E}_p\{(\hat{\bet}(\by)-\bet_0)(\hat{\bet}(\by)-\bet_0)^{\intercal}\} \succeq  \text{MCRB}(\bet_0),
		\label{eq:MCRB}
	\end{align}
	where $\mathbb{E}_p\{\cdot\}$ denotes the expectation operator under the true model $p(\by|\bbet)$ and 
	\begin{align}
		\text{MCRB}(\bet_0) \triangleq \bA_{\bet_0}^{-1} \bB_{\bet_0}\bA_{\bet_0}^{-1},
	\end{align}
	in which the $(i,j)$-th elements of the matrices $\bA_{\bet_0}$ and  $\bB_{\bet_0}$ are calculated as 
	\begin{align}\label{eq:Aeta0}
		[\bA_{\bet_0}]_{ij} &= \mathbb{E}_p\left\{\frac{\partial^2}{\partial \eta_i \partial \eta_j} \ln \tilde{p}(\by|\bet)\Big|_{\bet = \bet_0}\right\}, \\\label{eq:Beta0}
		[\bB_{\bet_0}]_{ij} &= \mathbb{E}_p\left\{\frac{\partial}{\partial \eta_i } \ln \tilde{p}(\by|\bet) \frac{\partial}{\partial \eta_j } \ln \tilde{p}(\by|\bet)\Big|_{\bet =  \bet_0}\right\},
	\end{align} 
	for $1\leq i, j\leq 5$, with $\eta_i$ denoting the $i$th element of $\bet$. Note that without model mismatch, 
	$\bA_{\bet_0} = \bA_{\bbet} = -\bB_{\bet_0} = -\bB_{\bbet}$ so that the \ac{MCRB} reverts to the classical Cram\'{e}r-Rao bound (CRB) \cite{Fortunati2018Chapter4}. 
	
	Since the value of the pseudo-true parameter is generally not of interest, the \ac{MCRB} is used to establish the \ac{LB} of any MS-unbiased estimator with respect to the true parameter value \cite{Fortunati2017}
	\begin{align}
		\mathbb{E}_p\{(\hat{\bet}(\by)-\bbet)(\hat{\bet}(\by)-\bbet)^{\intercal}\} \succeq \text{LB}(\bet_0),\label{eq:LB}
	\end{align}
	where $\text{LB}(\bet_0)\triangleq \text{MCRB}(\bet_0)  + (\bbet-\bet_0)(\bbet-\bet_0)^{\intercal}$. 
	The last term is a bias term; that is, $\text{Bias}(\bet_0) \triangleq (\bbet-\bet_0)(\bbet-\bet_0)^{\intercal}$, and it is independent of the SNR. Hence, as the SNR tends to infinity, the MCRB term goes to zero, and the bias term becomes a tight bound for the MSE of any MS-unbiased estimator.

	\subsection{MCRB Derivation for RIS-aided Localization}\label{sec:MCRBder}

	
	\subsubsection{Determining the Pseudo-True Parameter}
	To derive the MCRB for estimating the UE position under mismatch between the amplitude models for the RIS elements, we should first calculate the $\bet_0$ parameter in \eqref{eq:eta0} for the system model described in Section~\ref{sec:System}; that is, we should find the value of $\bet$ that minimizes the KL divergence between $p(\by|\bbet)$ in \eqref{eq:truedpf} and $\tilde{p}(\by|\bet)$ in \eqref{eq:assumedpdf}. The following lemma characterizes $\bet_0$ for the considered system model.
	\begin{lemma}
		The value of $\bet$ that minimizes the KL divergence between $p(\by|\bbet)$ in \eqref{eq:truedpf} and $\tilde{p}(\by|\bet)$, which is a parameterized version of \eqref{eq:assumedpdf}, can be expressed as
		\begin{equation} \label{eq_betz}
			\bet_0 = \argmin_{\bet\in\mathbb{R}^5} \norm{\bmu(\bbet)-\tbmu(\bet)}.
		\end{equation}
	\end{lemma}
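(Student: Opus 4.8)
The plan is to exploit the fact that both $p(\by|\bbet)$ and $\tilde{p}(\by|\bet)$ are circularly symmetric complex Gaussian densities sharing the same noise covariance $N_0\bI$, differing only through their mean vectors $\bmu(\bbet)$ and $\tbmu(\bet)$. Under this structure the KL divergence collapses to a scaled squared distance between the two means, from which the claimed minimization follows at once. Accordingly, I would first write the divergence in its defining form $D(p(\by|\bbet)\Vert\tilde{p}(\by|\bet)) = \mathbb{E}_p\{\ln p(\by|\bbet) - \ln \tilde{p}(\by|\bet)\}$ and substitute the two likelihoods from \eqref{eq:truedpf} and \eqref{eq:assumedpdf}. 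The normalization factors $(\pi N_0)^{-T}$ cancel, and the term $\mathbb{E}_p\{\ln p(\by|\bbet)\}$ is independent of the optimization variable $\bet$; hence, up to an additive $\bet$-independent constant, minimizing the divergence over $\bet$ is equivalent to minimizing $\frac{1}{N_0}\mathbb{E}_p\{\norm{\by-\tbmu(\bet)}^2\}$.

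Next, I would evaluate this remaining expectation in closed form. Writing $\by = \bmu(\bbet) + \boldsymbol{n}$, where $\boldsymbol{n}\triangleq[n_1\,\ldots\,n_T]^{\intercal}$ is the zero-mean noise vector under the true model, gives $\norm{\by-\tbmu(\bet)}^2 = \norm{\bmu(\bbet)-\tbmu(\bet)}^2 + 2\,\text{Re}\{(\bmu(\bbet)-\tbmu(\bet))^{\mathsf{H}}\boldsymbol{n}\} + \norm{\boldsymbol{n}}^2$. Taking $\mathbb{E}_p\{\cdot\}$, the cross term vanishes because $\mathbb{E}_p\{\boldsymbol{n}\}=\boldsymbol{0}$, and $\mathbb{E}_p\{\norm{\boldsymbol{n}}^2\}$ is a constant that does not involve $\bet$. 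This shows directly that $D(p(\by|\bbet)\Vert\tilde{p}(\by|\bet)) = \frac{1}{N_0}\norm{\bmu(\bbet)-\tbmu(\bet)}^2$ up to a constant, so the minimizer is $\bet_0 = \argmin_{\bet}\norm{\bmu(\bbet)-\tbmu(\bet)}^2$.

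Finally, since $x\mapsto\sqrt{x}$ is strictly increasing on $[0,\infty)$, minimizing the squared norm and minimizing the norm itself return the same argument, which yields the stated expression \eqref{eq_betz}.

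I do not expect a genuine obstacle here, as the argument is the standard reduction of a Gaussian KL divergence to a least-squares mean-matching problem. The only point requiring mild care is the bookkeeping of the complex-valued quantities, namely handling the cross term through its real part and confirming that the common noise covariance eliminates any log-determinant or trace contributions that would otherwise arise if the true and assumed models had differing noise statistics.
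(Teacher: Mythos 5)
Your proof is correct and follows essentially the same route as the paper's: discard the $\bet$-independent entropy term, expand $\norm{\by-\tbmu(\bet)}^2$ about the true mean so the cross term vanishes and the noise energy contributes only a constant, and conclude by monotonicity of the square root. The only cosmetic difference is that you keep the expectation in vector form while the paper factors it per transmission $t$; the substance is identical.
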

	\begin{proof}
		See Appendix~\ref{sec:AppA}.
	\end{proof}
	
	Hence, the pseudo-true parameter minimizes the Euclidean distance  between the noise-free observation under the true and assumed models, with respect to the assumed model.

	Let $\gamma(\bet)\triangleq  \norm{\bmu(\bbet)-\tbmu(\bet)}$. It is noted from \eqref{eq:mut} and \eqref{eq:mut_til} that $\gamma(\bet)$ is non-convex with respect to $\bet$; hence, it is challenging to solve \eqref{eq_betz} in its current form. Based on \eqref{eq:assumedpdf} and \eqref{eq:mut_til}, we can re-write \eqref{eq_betz} as
	\begin{align}\label{eq_etabar2}
		(\balp_0, \bp_0) = \argmin_{(\balp, \bp)} \norm{\bmu(\bbet)- \alpha \, \bcc(\bp) } ,
	\end{align}
	where $[\bcc(\bp)]_t \triangleq \sum_{m=1}^{M} [\bb(\bp)]_m \tilde{w}_{t,m} s_t$.
	The optimal complex-valued $\alpha$ for any given $\bp$ can be expressed 
	as 
	\begin{gather}\label{eq:alpOpt}
		\alpha = \frac{ \bcc^{\mathsf{H}}(\bp) \bmu(\bbet) }{ \bcc^{\mathsf{H}}(\bp) \bcc(\bp)  } \,\cdot
	\end{gather}
	Inserting \eqref{eq:alpOpt} into \eqref{eq_etabar2}, the minimization problem can be reduced to a three-dimensional search as follows:
	\begin{gather}\label{eq_etabar3}
		\bp_0 = \argmax_{\bp} \bmu(\bbet)^{\mathsf{H}} \projrange{\bcc(\bp)} \bmu(\bbet) ~,
	\end{gather}
	where $\projrange{\boldsymbol{\rmx}} \triangleq {\boldsymbol{\rmx} \boldsymbol{\rmx}^{\mathsf{H}}}/{\boldsymbol{\rmx}^{\mathsf{H}} \boldsymbol{\rmx}}$. Therefore, $\bet_0=[\balp_0^{\intercal}~\bp_0^{\intercal}]^{\intercal}$ can be found by first performing a three-dimensional optimization as in \eqref{eq_etabar3}, and then calculating $\alpha_0$ via \eqref{eq:alpOpt} and obtaining $\balp_0$ as $\balp_0 = [\text{Re}(\alpha_0)~ \text{Im}(\alpha_0)]^{\intercal}$.
	
	\begin{rem}
		Note that to determine the pseudo-true parameter, the true parameter (including the location $\bbp$) is known; hence, it can be used to initialize the optimization problem \eqref{eq_etabar3}, significantly reducing the computational complexity. 
	\end{rem}

	\subsubsection{Deriving the MCRB}
	After finding $\bet_0$, we compute the matrices  $\bA_{\bet_0}$ from \eqref{eq:Aeta0} and  $\bB_{\bet_0}$ from \eqref{eq:Beta0} for evaluating the MCRB in \eqref{eq:MCRB}. Due to page limitation, details of computation of $\bA_{\bet_0}$ and  $\bB_{\bet_0}$ are not presented in this manuscript.	Based on the pdf expressions in \eqref{eq:truedpf}--\eqref{eq:assumedpdf}, \eqref{eq:Aeta0} becomes, with $\bep(\bbet)=\bmu(\bbet)-\tbmu(\bet)$
	\begin{align}\label{eq:Aeta0_1}
		& 	[\bA_{\bet_0}]_{ij}  = \int p(\by|\bbet)\frac{\partial^2}{\partial \eta_i \partial \eta_j} \ln \tilde{p}(\by|\bet)\Big|_{\bet = \bet_0}\, \text{d}\by \\
		& = \frac{2}{N_0} \Re\left\{\bep(\bbet)^{\mathsf{H}} \frac{\partial^2\tbmu(\bet)}{\partial\eta_i \partial\eta_j}-\Big(\frac{\partial\tbmu(\bet)}{\partial\eta_i}\Big)^{\mathsf{H}}\frac{\partial\tbmu(\bet)}{\partial\eta_j}\right\}\Big|_{\bet = \bet_0},
	\end{align}
	where  $\frac{\partial^2\tbmu(\bet)}{\partial\eta_i \partial\eta_j}\triangleq \left[\frac{\partial^2 \tmu_1(\bet)}{\partial \eta_i \partial \eta_j} \, \ldots \, \frac{\partial^2 \tmu_T(\bet)}{\partial \eta_i \partial \eta_j}\right]^{\intercal}$. 
	%
	Similarly, after some algebraic manipulation, the $(i,j)$th entry of matrix $\bB_{\bet_0}$ in \eqref{eq:Beta0} can be written as
	\begin{align}
		[\bB_{\bet_0}]_{ij} &= \frac{2}{N_0}\Bigg[ \frac{2}{N_0}\Re\left\{\bep(\bbet)^{\mathsf{H}}\frac{\partial\tbmu(\bbet)}{\partial\eta_i}\right\} \Re\left\{\bep(\bbet)^{\mathsf{H}}\frac{\partial\tbmu(\bbet)}{\partial\eta_j}\right\}  \notag \\
		&+ \Re \left\{\left(\frac{\partial\tbmu(\bbet)}{\partial\eta_i}\right)^{\mathsf{H}}\frac{\partial\tbmu(\bbet)}{\partial\eta_j}\right\}\Bigg]\Bigg|_{\bet = \bet_0}.
	\end{align}
	Therefore, once we have  computed the first and the second derivatives of $\tmu_t(\bet)$ with respect to $\bet$, we can easily compute the matrices $\bA_{\bet_0}$ and $\bB_{\bet_0}$ as specified above. The derivatives are presented in Appendix \ref{sec:AppB}. Based on $\bA_{\bet_0}$ and $\bB_{\bet_0}$, the MCRB in \eqref{eq:MCRB} and the lower bound in \eqref{eq:LB} can be evaluated in a straightforward manner. 

	\section{Mismatched Estimator}
	\subsection{Definition and Relation to MCRB}
	We introduce the \ac{MML} estimator as \cite{Fortunati2017}
	\begin{equation}\label{eq:MML1}
		\hat{\bet}_{\text{MML}}(\by) = \argmax_{\bet\in\mathbb{R}^5} \ln \tilde{p}(\by|\bet). 
	\end{equation}
	Under some regularity conditions, it is shown that $\hat{\bet}_{\text{MML}}(\by)$ is asymptotically MS-unbiased and its error covariance matrix is asymptotically equal to the $\text{MCRB}(\bet_0)$ \cite[Thm. 2]{Fortunati2017}. Hence, the covariance matrix of the MML estimator is asymptotically tight to the MCRB. 
	
	\subsection{MML Estimator for UE Location}
	We now investigate the MML estimator for the UE position under model misspecification regarding the RIS amplitudes. From \eqref{eq:assumedpdf} and \eqref{eq:MML1}, the MML estimator based on the received signal $\by$ can be expressed as $\hat{\bet}_{\text{MML}}(\by)=\argmin_{\bet\in\mathbb{R}^5} \norm{\by-\tbmu(\bet)}$. 
	Since this problem is in the same form as the optimization problem in \eqref{eq_betz}, it can be reduced to a three-dimensional optimization problem as discussed in Section~\ref{sec:MCRBder}. In order to solve the resulting problem, initialization can be very critical as we are facing with a non-convex optimization problem. During the estimation process, we do not have access to the true position $\bbp$. Hence, we cannot use the true position vector $\bbp$ for the initialization. If an arbitrarily chosen position vector is used for the initialization, the global optimal solution of \eqref{eq:MML1} is not always obtained. To find a remedy for this issue, we use the Jacobi-Anger expansion approach discussed in \cite{Shaban2021} to obtain an initial position vector rather than starting from an arbitrarily generated position vector.  
	In particular, for the position vector $\bbp = \norm{\bbp}[\sin(\vartheta)\cos(\varphi) \, \sin(\vartheta)\sin(\varphi)\, \cos(\vartheta) ]^{\intercal}$, for some $N\in\mathbb{N}$,  $\ba(\bbp)$ is approximated as 
	\begin{equation}\label{eq:JacobiAnger}
	    [\ba(\bbp)]_m \approx \sum_{n=-N}^{N} j^{n} J_n\left(-\frac{2\pi}{\lambda} \norm{\bp_m} \sin(\vartheta)\right) e^{jn(\varphi-\psi_m)},
	\end{equation}
	where $\bp_m = \norm{\bp_m} [\cos(\psi_m)\, \sin(\psi_m) \, 0]^{\intercal}$ and $J_n(\cdot)$ is the $n$th order Bessel function of the first kind. After defining $\boldsymbol{G}(\varphi)$ and $\boldsymbol{h}(\varphi)$  exactly as in \cite{Shaban2021}, we can rewrite $y_t$ as
	\begin{equation}
	    y_t \approx \bar{\alpha} \boldsymbol{h}^{\intercal}(\varphi) \boldsymbol{G}(\vartheta) \text{diag}(\bw_t) \ba(\bp_{\text{BS}})s_t + n_t,
	\end{equation}
	By using two-step simple line searches given in \cite[Eqs.~ 31, 32]{Shaban2021}, estimates of $\varphi$ and $\vartheta$ are obtained. Let $\hat{\vartheta}$ and $\hat{\varphi}$ denote the estimates of $\vartheta$ and $\varphi$.  After these two steps, we generate random variables $\tilde{d}$ such that 
	$\tilde{d}[\sin(\hat{\vartheta})\cos(\hat{\varphi}) \, \sin(\hat{\vartheta})\sin(\hat{\varphi})\, \cos(\hat{\vartheta}) ]^{\intercal}$ are initial position vectors for any $\tilde{d}$. 
	As discussed in \cite{Shaban2021}, this two-step simple line searches have a low computational complexity and do not add any complexity cost to the MML estimator.

	\section{Numerical Results}
	
	In this section, we first present numerical examples for evaluating the lower bounds in various scenarios, and then compare the performance of the MML estimator against the lower bounds. 
	
	
	\subsection{Simulation Setup}
	We consider an RIS with $M = 2500$ elements, where the inter-element spacing is $\lambda/2$ and the area of each element is $A = \lambda^2/4$ \cite{Shaban2021}. The carrier frequency is equal to $f_c = 28\,$ GHz. The RIS is modeled to lie in the X-Y plane with $\bp_{\text{RIS}}=[0\, 0\, 0]^{\intercal}$. Moreover, for the phase profile, the $\theta_{t,m}$ values are generated uniformly and independently between $-\pi$ and $\pi$. For the  model of the RIS in \eqref{eq_beta_model}, $\kappa=2$ and $\phi=0$. The BS is located at $\bp_{\text{BS}}=5.77\times 
	[-1\, 1\, 1]^{\intercal}$ meters.  For given distance $d$ to the RIS, the UE is located at $d\times {[1\, 1\, 1]^{\intercal}}/{\norm{[1\, 1\, 1]}}$ meters. We set the number of transmission to $T = 50$. For simplicity, we assume that $s_t = \sqrt{E_s}$ for any $t$. The  SNR is defined as
	\begin{gather}
		\text{SNR} = \frac{E_s \norm{\bbalp}^2}{T N_0} \sum_{t=1}^{T}  \abs{\bb^{\intercal}(\bbp) \bw_t}^2\,.
	\end{gather}
	
	\subsection{Analysis of Lower Bounds and MML Estimator}

	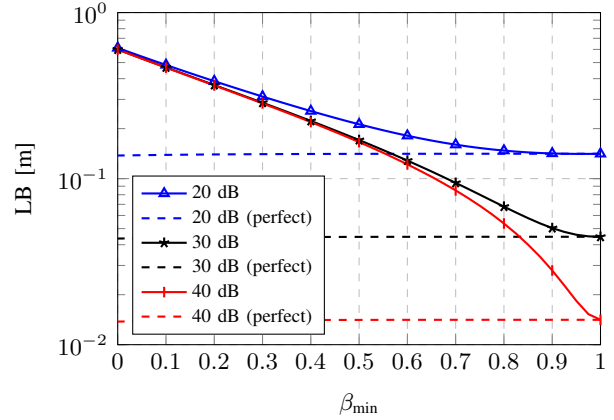
\begin{figure}
		\centering
		\begin{tikzpicture}
			\begin{semilogyaxis}[
				width=8cm,
            height=6cm,
				legend style={nodes={scale= 0.8, transform shape},at={(1,1)},anchor=north east}, 
				legend cell align={left},
				legend image post style={mark indices={}},
				xlabel={$\beta_{\text{min}}$},
				ylabel={LB [m]},
				xmin=0, xmax=1,
				ymin=0.01, ymax=1,
				xtick={0, 0.1, 0.2, 0.3, 0.4, 0.5, 0.6, 0.7, 0.8, 0.9, 1},
				ytick={0.01,0.1,1},
				legend pos=south west,
				ymajorgrids=true,
				xmajorgrids=true,
				grid style=dashed,
				]
				
				\addplot[thick,
				color=blue,
				mark = triangle,
				mark indices={1,5,9,13,17,21,25,29,33,37,41},
				mark options={solid},
				]
				coordinates {
					(1,0.14117)(0.975,0.1409)(0.95,0.1409)(0.925,0.14117)(0.9,0.14174)(0.875,0.14263)(0.85,0.14387)(0.825,0.14549)(0.8,0.14752)(0.775,0.14997)(0.75,0.15288)(0.725,0.15628)(0.7,0.16019)(0.675,0.16462)(0.65,0.16958)(0.625,0.17513)(0.6,0.18131)(0.575,0.18807)(0.55,0.19555)(0.525,0.20362)(0.5,0.21243)(0.475,0.22197)(0.45,0.2323)(0.425,0.24335)(0.4,0.25526)(0.375,0.26807)(0.35,0.28176)(0.325,0.29641)(0.3,0.31206)(0.275,0.32883)(0.25,0.34673)(0.225,0.36589)(0.2,0.38636)(0.175,0.40824)(0.15,0.43166)(0.125,0.45675)(0.1,0.4836)(0.075,0.51242)(0.05,0.54336)(0.025,0.57662)(0,0.61241)
				};
				
				\addplot[thick, dashed,
				color=blue,
				mark options={solid},
				]
				coordinates {
					(1,0.14117)(0.975,0.14117)(0.95,0.14118)(0.925,0.14118)(0.9,0.14119)(0.875,0.14119)(0.85,0.14119)(0.825,0.14119)(0.8,0.14119)(0.775,0.14119)(0.75,0.14118)(0.725,0.14117)(0.7,0.14116)(0.675,0.14115)(0.65,0.14114)(0.625,0.14112)(0.6,0.1411)(0.575,0.14107)(0.55,0.14104)(0.525,0.14101)(0.5,0.14097)(0.475,0.14093)(0.45,0.14088)(0.425,0.14082)(0.4,0.14076)(0.375,0.14068)(0.35,0.1406)(0.325,0.14051)(0.3,0.1404)(0.275,0.14028)(0.25,0.14015)(0.225,0.14)(0.2,0.13984)(0.175,0.13966)(0.15,0.13945)(0.125,0.13923)(0.1,0.13898)(0.075,0.13871)(0.05,0.13841)(0.025,0.13809)(0,0.13774)
				};

				\addplot[thick,
				color=black,
				mark = star,
				mark indices={1,5,9,13,17,21,25,29,33,37,41},
				mark options={solid},
				]
				coordinates {
					(1,0.044641)(0.975,0.044906)(0.95,0.045923)(0.925,0.047774)(0.9,0.050376)(0.875,0.053731)(0.85,0.0578)(0.825,0.062499)(0.8,0.067806)(0.775,0.073668)(0.75,0.08005)(0.725,0.086946)(0.7,0.094314)(0.675,0.10209)(0.65,0.11046)(0.625,0.11916)(0.6,0.12846)(0.575,0.13818)(0.55,0.14848)(0.525,0.15932)(0.5,0.17076)(0.475,0.18271)(0.45,0.19531)(0.425,0.20858)(0.4,0.22254)(0.375,0.23726)(0.35,0.2528)(0.325,0.26916)(0.3,0.28643)(0.275,0.30471)(0.25,0.32402)(0.225,0.34449)(0.2,0.36621)(0.175,0.38925)(0.15,0.41377)(0.125,0.43984)(0.1,0.46766)(0.075,0.49736)(0.05,0.52911)(0.025,0.56312)(0,0.59962)
				};
				
				\addplot[thick,
				color=black,
				style = dashed,
				mark options={solid},
				]
				coordinates {
					(1,0.044641)(0.975,0.044643)(0.95,0.044645)(0.925,0.044646)(0.9,0.044647)(0.875,0.044648)(0.85,0.044649)(0.825,0.044648)(0.8,0.044648)(0.775,0.044647)(0.75,0.044645)(0.725,0.044643)(0.7,0.04464)(0.675,0.044636)(0.65,0.044631)(0.625,0.044626)(0.6,0.044619)(0.575,0.044611)(0.55,0.044602)(0.525,0.044591)(0.5,0.044579)(0.475,0.044565)(0.45,0.044549)(0.425,0.044531)(0.4,0.044511)(0.375,0.044488)(0.35,0.044462)(0.325,0.044432)(0.3,0.044399)(0.275,0.044362)(0.25,0.04432)(0.225,0.044273)(0.2,0.044221)(0.175,0.044163)(0.15,0.044098)(0.125,0.044027)(0.1,0.043949)(0.075,0.043863)(0.05,0.043769)(0.025,0.043667)(0,0.043557)
				};

				\addplot[	thick,
				color=red,
				mark = |,
				mark options={solid},
				mark indices={1,5,9,13,17,21,25,29,33,37,41}
				]
				coordinates {
					(1,0.014117)(0.975,0.015269)(0.95,0.01836)(0.925,0.022755)(0.9,0.027994)(0.875,0.033804)(0.85,0.040002)(0.825,0.046691)(0.8,0.053679)(0.775,0.061019)(0.75,0.068665)(0.725,0.076626)(0.7,0.084953)(0.675,0.093571)(0.65,0.10265)(0.625,0.11199)(0.6,0.12188)(0.575,0.13215)(0.55,0.14301)(0.525,0.15421)(0.5,0.16595)(0.475,0.17829)(0.45,0.19124)(0.425,0.20476)(0.4,0.21901)(0.375,0.23398)(0.35,0.24969)(0.325,0.26628)(0.3,0.28374)(0.275,0.30217)(0.25,0.32167)(0.225,0.34229)(0.2,0.36412)(0.175,0.3873)(0.15,0.41193)(0.125,0.43813)(0.1,0.46605)(0.075,0.49583)(0.05,0.52768)(0.025,0.56178)(0,0.59833)
				};
				
				\addplot[thick,
				color=red,
				style= dashed,
				mark options={solid},
				]
				coordinates {
					(1,0.014117)(0.975,0.014117)(0.95,0.014118)(0.925,0.014118)(0.9,0.014119)(0.875,0.014119)(0.85,0.014119)(0.825,0.014119)(0.8,0.014119)(0.775,0.014119)(0.75,0.014118)(0.725,0.014117)(0.7,0.014116)(0.675,0.014115)(0.65,0.014114)(0.625,0.014112)(0.6,0.01411)(0.575,0.014107)(0.55,0.014104)(0.525,0.014101)(0.5,0.014097)(0.475,0.014093)(0.45,0.014088)(0.425,0.014082)(0.4,0.014076)(0.375,0.014068)(0.35,0.01406)(0.325,0.014051)(0.3,0.01404)(0.275,0.014028)(0.25,0.014015)(0.225,0.014)(0.2,0.013984)(0.175,0.013966)(0.15,0.013945)(0.125,0.013923)(0.1,0.013898)(0.075,0.013871)(0.05,0.013841)(0.025,0.013809)(0,0.013774)
				};

				\legend{20 dB, 20 dB (perfect), 30 dB, 30 dB (perfect), 40 dB, 40 dB (perfect) }
				
			\end{semilogyaxis}
		\end{tikzpicture}
		\caption{LB versus $\beta_{\text{min}}$ for SNR = $20\,$dB, $30\,$dB and $40\,$dB when the UE distance is $5$ meters. The curves marked (perfect) are nearly flat and correspond to the the assumed model being equal to the true model for different $\beta_{\text{min}}$.}
		\label{fig:1}
	\end{figure}

	
	To solve \eqref{eq_etabar3} for LB computation, we employ the GlobalSearch algorithm of MATLAB by providing $10$ different initial points. These initial points are generated by multiplying each component of the true position of the UE, $\bbp$, by independent uniform random variables between zero and one.
	
		For the initialization of the MML estimator, $N$ in \eqref{eq:JacobiAnger} is taken as $5$ for using the Jacobi-Anger expansion approach. Let $\hat{\vartheta}$ and $\hat{\varphi}$ denote the estimates of $\vartheta$ and $\varphi$.  After obtaining the the estimates of $\vartheta$ and $\varphi$, for the norm of the $\bbp$, we generate a random variable $\tilde{d}$, between $0$ and $1000$ for 10 different seeds, and we run the GlobalSearch algorithm of MATLAB for each initial point.

	\subsection{Results and Discussion}
	
	In Fig.~\ref{fig:1}, we show $\text{LB}$ as a function of $\beta_{\text{min}}$ for SNRs of $20$, $30$, and $40$ dB when the UE distance is $5$ meters from the center of the RIS. In addition, for comparison purposes, the lower bounds in the presence of the perfect knowledge of the $\beta_{\text{min}}$ values are also presented (marked as ``Perfect" in the figure).\footnote{To obtain these curves, it is assumed that for any given $\beta_{\text{min}}$, the perfect knowledge of the RIS amplitudes is available. Hence, this bound reduces to the classical CRB expression as in \cite[Eq.~ 9]{Shaban2021}. As $\bmu$ changes with respect to the true value of $\beta_{\text{min}}$, this expression is computed for each $\beta_{\text{min}}$ value.} We observe from the figure that as $\beta_{\text{min}}$ decreases, i.e., as the mismatch between the true and the assumed models increases, the LB increases and raising the SNR level does not improve the LB values significantly. In addition, the sensitivity to the model mismatch is more pronounced at higher SNR, while for an SNR of  $20$ dB, the performance is relatively insensitive for $\beta_{\text{min}}>0.7$. Interestingly, we note that when the true model is known (i.e., in the presence of perfect knowledge of $\beta_{\text{min}}$), the value of $\beta_{\text{min}}$ does not influence the LB values notably.

		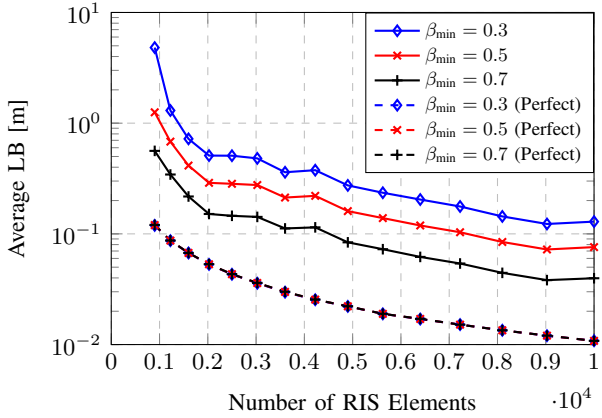
\begin{figure}
		\centering
		\begin{tikzpicture}
			\begin{semilogyaxis}[
			width=8cm,
            height=6cm,
				legend style={nodes={scale= 0.8, transform shape},at={(1,1)},anchor=north east}, 
				legend cell align={left},
				xlabel={Number of RIS Elements},
				ylabel={Average LB [m]},
				xmin=0, xmax=10^4,
				ymin=0.01, ymax=10,
				xtick={0, 1000, 2000, 3000, 4000, 5000, 6000, 7000, 8000, 9000,10000},
				ytick={0.01,0.1,1,10},
				ymajorgrids=true,
				xmajorgrids=true,
				grid style=dashed,
			    every axis plot/.append style={thick},
				]
				
				\addplot[
				color=blue,
				mark = diamond,
				mark options={solid},
				]
				coordinates {
					(900,4.8171)(1225,1.3019)(1600,0.72145)(2025,0.50953)(2500,0.50826)(3025,0.48078)(3600,0.36028)(4225,0.37616)(4900,0.27408)(5625,0.23523)(6400,0.20428)(7225,0.17677)(8100,0.14375)(9025,0.12286)(10000,0.12901)
				};
				
				\addplot[
				color=red,
				mark = x,
				mark options={solid},
				]
				coordinates {
					(900,1.2537)(1225,0.68138)(1600,0.4141)(2025,0.28915)(2500,0.28333)(3025,0.27614)(3600,0.21288)(4225,0.22055)(4900,0.16056)(5625,0.13885)(6400,0.11908)(7225,0.10345)(8100,0.084466)(9025,0.0723)(10000,0.075931)
				};
				
				\addplot[
				color=black,
				mark = +,
				mark options={solid},
				]
				coordinates {
					(900,0.56195)(1225,0.34373)(1600,0.21763)(2025,0.15127)(2500,0.1456)(3025,0.14231)(3600,0.11195)(4225,0.1145)(4900,0.083857)(5625,0.072691)(6400,0.061925)(7225,0.05399)(8100,0.044449)(9025,0.038218)(10000,0.039734)
				};
				
				
				\addplot[
				color=blue,
				mark = diamond,
				style = dashed,
				mark options={solid},
				]
				coordinates {
				(900,0.11927)(1225,0.086862)(1600,0.067211)(2025,0.052954)(2500,0.043148)(3025,0.035793)(3600,0.029751)(4225,0.0254)(4900,0.022124)(5625,0.018948)(6400,0.016996)(7225,0.015154)(8100,0.013463)(9025,0.012028)(10000,0.010822)
				};
				
				\addplot[
				color=red,
				mark = x,
				style = dashed,
				mark options={solid},
				]
				coordinates {
    			(900,0.12003)(1225,0.086994)(1600,0.067277)(2025,0.053129)(2500,0.043244)(3025,0.035978)(3600,0.029913)(4225,0.025462)(4900,0.022169)(5625,0.018948)(6400,0.01704)(7225,0.015137)(8100,0.013459)(9025,0.012038)(10000,0.010802)
				};
				
				\addplot[
				color=black,
				mark = +,
				style = dashed,
				mark options={solid},
				]
				coordinates {
    			(900,0.12045)(1225,0.087063)(1600,0.0673)(2025,0.053252)(2500,0.043297)(3025,0.036081)(3600,0.030011)(4225,0.025507)(4900,0.022194)(5625,0.018965)(6400,0.017077)(7225,0.015127)(8100,0.013455)(9025,0.012049)(10000,0.010801)
				};
				
		\legend{$\beta_{\text{min}} = 0.3$, $\beta_{\text{min}} = 0.5$, $\beta_{\text{min}} = 0.7$, $\beta_{\text{min}} = 0.3$ (Perfect),$\beta_{\text{min}} = 0.5$ (Perfect), $\beta_{\text{min}} = 0.7$ (Perfect)}	
			\end{semilogyaxis}
		\end{tikzpicture}
		\caption{Average LB versus RIS size for $\beta_{\text{min}}\in\{0.3, 0.5, 0.7\}$, when SNR  = 30 dB, and the UE distance is 5 meters.}
		\label{fig:2}
	\end{figure}

	In Fig.~\ref{fig:2}, to understand the impact of the number of RIS elements, by averaging 200 different random phase profiles, average LB values versus RIS size are shown at an SNR of $30\,$dB. We consider $\beta_{\text{min}}\in\{0.3, 0.5, 0.7\}$. Moreover, the average lower bounds under the perfect knowledge of the $\beta_{\text{min}}$ values are plotted. We observe that as the RIS size increases or as $\beta_{\text{min}}$ increases, we obtain lower LB values in general. Interestingly, the curves for different $\beta_{\text{min}}$ values are almost parallel. We note the significant price we pay under mismatch: with perfect knowledge of a RIS with 1000 elements attains similar performance as a RIS with 7000 elements when $\beta_{\text{min}}=0.5$ under mismatch.



	Finally, in Fig.~\ref{fig:3}, we show the performance of the MML estimator versus SNR for $\beta_{\text{min}} = 0.3$ and $0.7$ when the UE distance is $5$ meters. In addition to the performance of the MML estimator, the LB, and the MCRB, the bias term values are also plotted. We observe that the MML estimator exhibits three distinct regimes: a low-SNR regime where MML is limited by noise peaks and thus far away from the LB; a medium-SNR regime where MML is close to the LB, which itself is dominated by the MCRB; and a high-SNR regime, where the MML and LB are limited by the bias term $\text{Bias}(\bet_0)$. 

	\begin{figure}
		\centering
		\begin{tikzpicture}
			\begin{semilogyaxis}[	width=8cm,
            height=6cm,
				legend style={nodes={scale= 0.8, transform shape},at={(1,1)},anchor=north east}, 
				legend cell align={left},
				xlabel={SNR (dB)},
				ylabel={RMSE},
				xmin=-10, xmax=40,
				ymin=0.01, ymax=200,
				xtick={-10,-5,0,5,10,15,20,25,30,35,40},
				ytick={0.01,0.1,1,10,100},
				ymajorgrids=true,
				xmajorgrids=true,
				grid style=dashed,
			    every axis plot/.append style={thick},
				]
				
				\addplot[
				color=blue,
				mark = o, mark options={solid},
				style = dashed
				]
				coordinates {
					(-10,4.3174)(-7.5,3.2381)(-5,2.4288)(-2.5,1.8222)(0,1.3675)(5,0.77215)(10,0.43968)(15,0.25678)(20,0.16018)(25,0.11365)(30,0.094296)(35,0.087311)(40,0.084846)
				};
				
				\addplot[
				color=blue,
				mark = x,mark options={solid},
				style = dashed
				]
				coordinates {
					(-10,77.8379)(-7.5,74.815)(-5,44.9681)(-2.5,21.1869)(0,1.7915)(5,0.78774)(10,0.42852)(15,0.2516)(20,0.16007)(25,0.1155)(30,0.096159)(35,0.088571)(40,0.08569)
				};

				\addplot[
				color=blue,
				mark = square,mark options={solid},
				style = dashed
				]
				coordinates {
					(-10,4.3166)(-7.5,3.237)(-5,2.4274)(-2.5,1.8203)(0,1.365)(5,0.7676)(10,0.43161)(15,0.24272)(20,0.13649)(25,0.076754)(30,0.043161)(35,0.024271)(40,0.01365)
				};

				\addplot[
				color=blue,
				style = dashed
				]
				coordinates {
					(-10,0.083696)(-7.5,0.083693)(-5,0.083719)(-2.5,0.083724)(0,0.083758)(5,0.08372)(10,0.083847)(15,0.083781)(20,0.083829)(25,0.083812)(30,0.083838)(35,0.083869)(40,0.08374)
				};

				\addplot[
                color=red,
				mark = o
				]
				coordinates {
					(-10,4.1387)(-7.5,3.1092)(-5,2.3391)(-2.5,1.7641)(0,1.3361)(5,0.78706)(10,0.50082)(15,0.36641)(20,0.31207)(25,0.29279)(30,0.28647)(35,0.28437)(40,0.28372)
				};

				\addplot[
				color=red,
				mark = x
				]
				coordinates {
					(-10,101.0793)(-7.5,85.1294)(-5,74.3483)(-2.5,29.4231)(0,1.693)(5,0.77907)(10,0.48674)(15,0.36503)(20,0.31541)(25,0.29622)(30,0.28884)(35,0.28589)(40,0.28463)
				};
				
				\addplot[
				color=red,
				mark = square
				]
				coordinates {
					(-10,4.129)(-7.5,3.0962)(-5,2.3219)(-2.5,1.7412)(0,1.3057)(5,0.73425)(10,0.4129)(15,0.23219)(20,0.13057)(25,0.073425)(30,0.041289)(35,0.023219)(40,0.013057)
				};
				
				\addplot[
				color=red,
				]
				coordinates {
					(-10,0.28343)(-7.5,0.28347)(-5,0.28346)(-2.5,0.28345)(0,0.28344)(5,0.28344)(10,0.28344)(15,0.28345)(20,0.28344)(25,0.28344)(30,0.28348)(35,0.28342)(40,0.28342)
				};
				
				\legend{LB ($\beta_{\text{min}} = 0.7$), MML ($\beta_{\text{min}} = 0.7$), MCRB ($\beta_{\text{min}} = 0.7$), Bias ($\beta_{\text{min}} = 0.7$),LB ($\beta_{\text{min}} = 0.3$), MML ($\beta_{\text{min}} = 0.3$), MCRB ($\beta_{\text{min}} = 0.3$),Bias ($\beta_{\text{min}} = 0.3$) }
				
			\end{semilogyaxis}
		\end{tikzpicture}
		\caption{MML, LB, MCRB, and bias term versus SNR (dB) for $\beta_{\text{min}} \in \{ 0.3,0.7\}$ when the UE distance is $5$ meters.}
		\label{fig:3}
	\end{figure}
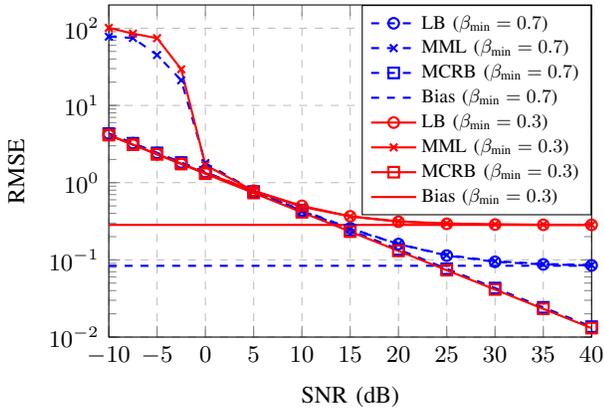


	\section{Concluding Remarks}
	
	We have considered the problem of RIS-aided near-field localization in the presence of model misspecification accounting for mismatch between ideal and realistic RIS amplitude models. In particular, we have focused on a scenario in which the exact model for the amplitudes of the RIS elements is unknown and the amplitudes are assumed to be constant (unity). 
	Based on the realistic amplitude model given in \cite{abeywickrama2020intelligent}, we have first derived the theoretical performance bounds for estimating the UE position  when the belief and the reality of the RIS amplitude models do not match. We have observed that the amplitude knowledge becomes crucial in the high SNR regime. Moreover, when the true model is known, the value of $\beta_{\text{min}}$ does not affect the theoretical bounds notably. Lastly, we have implemented the MML estimator for the considered problem, and observed that the MML estimator can achieve the LB in the high SNR regime, as expected. 
	\section*{Acknowledgment}
	This work was supported, in part, by the EU H2020 RISE-6G project under grant 101017011 and by the MSCA-IF grant
    888913 (OTFS-RADCOM).
	\appendices
	\section{Proof of Lemma 1} \label{sec:AppA}
	Based on the definition of the KL divergence and the system model in Section~\ref{sec:System}, \eqref{eq:eta0} can be expressed as
	\begin{align}
		\bet_0 &= \argmin_{\bet\in\mathbb{R}^5} \int p(\by|\bbet) \ln \left( \frac{ p(\by|\bbet)}{ \tilde{p}(\by|\bet)} \right) \, \text{d}\by  \\
		& =\argmin_{\bet\in\mathbb{R}^5} -\int p(\by|\bbet) \ln  \tilde{p}(\by|\bet)  \, \text{d}\by  \\
		& = \argmin_{\bet\in\mathbb{R}^5} \int p(\by|\bbet)  \norm{\by-\tbmu(\bet)}^2   \, \text{d}\by  \label{eq:KLlast}
	\end{align}
	where the second equality is due to the independence of $p(\by\mid\bbet)$ from $\bet$, and the last equality is obtained from \eqref{eq:assumedpdf}. Then, it can be shown that the following equations hold:
	\begin{align}	
		&\int p(\by|\bbet)  \norm{\by-\tbmu(\bet)}^2  \, d\by  = \sum_{t=1}^{T}\int p(\by|\bbet)  \abs{y_t -\tmu_t(\bet)}^2 \, \text{d}\by \nonumber \\
		& = \sum_{t=1}^{T} \underbrace{\left (\prod_{t'\neq t} \int p(y_{t'}|\bbet) \, \text{d}y_{t'}\right)}_{ = 1} \left(\int p(y_t|\bbet) \abs{y_t -\tmu_t(\bet)}^2 \, \text{d}y_t\right) \nonumber \\
		& = \sum_{t=1}^{T}\int p(y_t|\bbet) \abs{y_t -\tmu_t(\bet)}^2 \, \text{d} y_t. \label{eq:chainlast}
	\end{align}
	We now introduce $\epsilon_t(\bet)=\mu_t(\bbet)-\tmu_t(\bet)$, so that  $\abs{y_t -\tmu_t(\bet)}^2=\abs{y_t -\mu_t(\bbet) + \epsilon_t(\bet) }^2$, and the integral expression in 
	\eqref{eq:chainlast} can be manipulated as follows:
	\begin{align}\nonumber
		&\int p(y_t|\bbet) \abs{y_t -\tmu_t(\bet)}^2 \, \text{d} y_t   \\\nonumber
		&= \int p(y_t|\bbet) \abs{y_t-\mu_t(\bbet)}^2\, \text{d} y_t +
		\abs{\epsilon_t(\bet)}^2\int p(y_t|\bbet)  \, \text{d}y_t  \\\label{eq:derivLem3}
		&  + 2 \int p(y_t|\bbet)\Re\left((y_t-\mu_t(\bbet))^{*}\epsilon_t(\bet)\right) \, \text{d} y_t. 
	\end{align}
	Since $y_t\sim\mathcal{CN}(\mu_t(\bbet),N_0)$ and $\int p(y_t|\bbet)  \, \text{d}y_t = 1$, the expression in \eqref{eq:derivLem3} can be simplified as
	\begin{equation}
		\int p(y_t|\bbet) \abs{y_t -\tmu_t(\bet)}^2 \, \text{d} y_t  = N_0 +  \abs{\epsilon_t(\bet)}^2. \label{eq:integraleq}
	\end{equation}
	By combining \eqref{eq:KLlast}, \eqref{eq:chainlast} and \eqref{eq:integraleq}, we finally obtain that
	\begin{align*}
		\bar{\bet} &= \argmin_{\bet\in\mathbb{R}^5} \sum_{t=1}^{T} \left(N_0 + \abs{\epsilon_t(\bet)}^2\right)
		 = \argmin_{\bet\in\mathbb{R}^5} \sum_{t=1}^{T} \abs{\epsilon_t(\bet)}^2,
	\end{align*}
	which completes the proof.
	\section{Derivation of Entries in the MCRB} \label{sec:AppB}
Let $\bet$ be given by $\bet = [\alpha_r\, \alpha_i\, \rmx\, \rmy\, \rmz]^{\intercal}$. Also, define $\bp\triangleq [\rmx\, \rmy\, \rmz]^{\intercal}$, $b_m \triangleq [\bb(\bp)]_m$, and $\alpha \triangleq \alpha_r + j \alpha_i$. We also introduce $\bu=\frac{\bp-\bp_{\text{RIS}}}{\norm{\bp-\bp_{\text{RIS}}}}$and for any $1\leq m\leq M$, $\bu_m=\frac{\bp-\bp_m}{\norm{\bp-\bp_m}}$, where $\bu = [u_x\, u_y \, u_z]^{\intercal}$ and $\bu_m = [u_{m,x} \, u_{m,y} \, u_{m,z}]^{\intercal}$.Then, the first and second derivatives of $\tmu_t(\bet)$ with respect to $\bet$ are given as follows:
\begin{align*}
	\frac{\partial\tmu_t(\bet) }{\partial \alpha_r} &= \sum_{m=1}^{M} b_m \tilde{w}_{t,m} s_t, \, \frac{\partial\tmu_t(\bet) }{\partial \alpha_i} = j\sum_{m=1}^{M} b_m \tilde{w}_{t,m} s_t.
\end{align*}
For $\nu \in\{\rmx\, \rmy\, \rmz\}$, we can write
\begin{align*}
	\frac{\partial\tmu_t(\bet) }{\partial \nu} &= -j \frac{2\pi}{\lambda}\alpha \sum_{m=1}^{M} b_m  \left(u_{m,\nu}-u_{x}\right) \tilde{w}_{t,m} s_t, 
\end{align*}
\begin{align*}
	\frac{\partial^2\tmu_t(\bet) }{\partial \alpha_r \partial \nu} &=-j \frac{2\pi}{\lambda} \sum_{m=1}^{M} b_m  \left(u_{m,\nu}-u_\nu\right) \tilde{w}_{t,m} s_t,
\end{align*}
\begin{align*}
	\frac{\partial^2\tmu_t(\bet) }{\partial \alpha_i \partial \nu} &= j 	\frac{\partial^2\tmu_t(\bet) }{\partial \alpha_r \partial \nu},
\end{align*}
\begin{align*}
	&\frac{\partial^2 \tmu_t(\bet) }{\partial \nu \partial \nu} = -\alpha \frac{4\pi^2}{\lambda^2}\sum_{m=1}^{M} b_m \left(u_{m,\nu}-u_\nu\right)^2  \tilde{w}_{t,m} s_t \\
	&-j\frac{2\pi}{\lambda} \alpha\sum_{m=1}^{M} b_m \left(\frac{1-u^2_{m,\nu}}{\norm{\bp-\bp_m}} -\frac{1-u^2_{\nu}}{\norm{\bp-\bp_{\text{RIS}}}}\right)\tilde{w}_{t,m} s_t.
\end{align*}
Moreover, if $\nu_1, \nu_2\in\{\rmx\, \rmy\, \rmz\}$ and they correspond to different coordinates, it is possible express that 
\begin{align*}
	&\frac{\partial^2 \tmu_t(\bet) }{\partial \nu_1\partial \nu_2} = -\alpha \frac{4\pi^2}{\lambda^2}\sum_{m=1}^{M} b_m  \left(u_{m,\nu_1}-u_{\nu_1}\right)\left(u_{m,\nu_2}-u_{\nu_2}\right) \tilde{w}_{t,m} s_t  \\
	&+j\frac{2\pi}{\lambda} \alpha\sum_{m=1}^{M} b_m \left(\frac{u_{m,\nu_1} u_{m,\nu_2}}{\norm{\bp-\bp_m}}-\frac{u_{\nu_1} u_{\nu_2}}{\norm{\bp-\bp_\text{RIS}}}\right)\tilde{w}_{t,m} s_t.
\end{align*}

	\bibliographystyle{IEEEtran}
	\bibliography{bibfile}

\begin{thebibliography}{10}
\providecommand{\url}[1]{#1}
\csname url@samestyle\endcsname
\providecommand{\newblock}{\relax}
\providecommand{\bibinfo}[2]{#2}
\providecommand{\BIBentrySTDinterwordspacing}{\spaceskip=0pt\relax}
\providecommand{\BIBentryALTinterwordstretchfactor}{4}
\providecommand{\BIBentryALTinterwordspacing}{\spaceskip=\fontdimen2\font plus
\BIBentryALTinterwordstretchfactor\fontdimen3\font minus
  \fontdimen4\font\relax}
\providecommand{\BIBforeignlanguage}[2]{{%
\expandafter\ifx\csname l@#1\endcsname\relax
\typeout{** WARNING: IEEEtran.bst: No hyphenation pattern has been}%
\typeout{** loaded for the language `#1'. Using the pattern for}%
\typeout{** the default language instead.}%
\else
\language=\csname l@#1\endcsname
\fi
#2}}
\providecommand{\BIBdecl}{\relax}
\BIBdecl

\bibitem{saad2019vision}
W.~Saad \emph{et~al.}, ``{A vision of {6G} wireless systems: Applications,
  trends, technologies, and open research problems},'' \emph{IEEE Network},
  vol.~34, no.~3, pp. 134--142, 2019.

\bibitem{rappaport2019wireless}
T.~S. Rappaport \emph{et~al.}, ``{Wireless Communications and Applications
  Above 100 {GHz}: Opportunities and Challenges for {6G} and Beyond},''
  \emph{IEEE Access}, vol.~7, pp. 78\,729--78\,757, 2019.

\bibitem{tataria20216g}
H.~Tataria \emph{et~al.}, ``{{6G} Wireless Systems: Vision, Requirements,
  Challenges, Insights, and Opportunities},'' \emph{Proc. IEEE}, vol. 109,
  no.~7, pp. 1166--1199, 2021.

\bibitem{chiriyath2017radar}
A.~R. Chiriyath \emph{et~al.}, ``{Radar-communications convergence:
  Coexistence, cooperation, and co-design},'' \emph{IEEE Trans. Cogn. Commun.
  Netw.}, vol.~3, no.~1, pp. 1--12, 2017.

\bibitem{liu2021integrated}
\BIBentryALTinterwordspacing
F.~Liu \emph{et~al.}, ``{Integrated Sensing and Communications: Towards
  Dual-functional Wireless Networks for {6G} and Beyond},'' 2021. [Online].
  Available: \url{https://arxiv.org/abs/2108.07165}
\BIBentrySTDinterwordspacing

\bibitem{de2021convergent}
C.~{De Lima} \emph{et~al.}, ``{Convergent Communication, Sensing and
  Localization in {6G} Systems: An Overview of Technologies, Opportunities and
  Challenges},'' \emph{IEEE Access}, vol.~9, pp. 26\,902--26\,925, 2021.

\bibitem{wymeersch2021integration}
\BIBentryALTinterwordspacing
H.~Wymeersch \emph{et~al.}, ``{Integration of Communication and Sensing in
  {6G}: a Joint Industrial and Academic Perspective},'' 2021. [Online].
  Available: \url{https://arxiv.org/abs/2106.13023}
\BIBentrySTDinterwordspacing

\bibitem{JCS_2021}
D.~K. Pin~Tan \emph{et~al.}, ``{Integrated Sensing and Communication in {6G}:
  Motivations, Use Cases, Requirements, Challenges and Future Directions},'' in
  \emph{IEEE Int. Online Symposium on Joint Commun. and Sensing (JCS)}, 2021,
  pp. 1--6.

\bibitem{RIS_tutorial_2021}
Q.~Wu \emph{et~al.}, ``{Intelligent Reflecting Surface-Aided Wireless
  Communications: A Tutorial},'' \emph{IEEE Trans. Commun.}, vol.~69, no.~5,
  pp. 3313--3351, 2021.

\bibitem{elzanaty20216g}
\BIBentryALTinterwordspacing
A.~Elzanaty \emph{et~al.}, ``{Towards {6G} Holographic Localization: Enabling
  Technologies and Perspectives},'' 2021. [Online]. Available:
  \url{https://arxiv.org/abs/2103.12415}
\BIBentrySTDinterwordspacing

\bibitem{wymeersch2020radio}
H.~Wymeersch \emph{et~al.}, ``{Radio Localization and Mapping With
  Reconfigurable Intelligent Surfaces: Challenges, Opportunities, and Research
  Directions},'' \emph{IEEE Veh. Technol. Mag.}, vol.~15, no.~4, pp. 52--61,
  2020.

\bibitem{RIS_Access_2019}
E.~Basar \emph{et~al.}, ``{Wireless Communications Through Reconfigurable
  Intelligent Surfaces},'' \emph{IEEE Access}, vol.~7, pp. 116\,753--116\,773,
  2019.

\bibitem{RIS_commag_2021}
C.~Pan \emph{et~al.}, ``{Reconfigurable Intelligent Surfaces for {6G} Systems:
  Principles, Applications, and Research Directions},'' \emph{IEEE Commun.
  Mag.}, vol.~59, no.~6, pp. 14--20, 2021.

\bibitem{RIS_2018_TSP}
S.~Hu \emph{et~al.}, ``{Beyond Massive {MIMO}: The Potential of Positioning
  With Large Intelligent Surfaces},'' \emph{IEEE Trans. Signal Process.},
  vol.~66, no.~7, pp. 1761--1774, 2018.

\bibitem{Keykhosravi2020_SisoRIS}
K.~Keykhosravi \emph{et~al.}, ``{{SISO} {RIS}-Enabled Joint {3D} Downlink
  Localization and Synchronization},'' in \emph{IEEE Int. Conf. Commun.}, 2021,
  pp. 1--6.

\bibitem{RIS_bounds_TSP_2021}
A.~Elzanaty \emph{et~al.}, ``{Reconfigurable Intelligent Surfaces for
  Localization: Position and Orientation Error Bounds},'' \emph{IEEE Trans.
  Signal Process.}, pp. 1--1, 2021.

\bibitem{Shaban2021}
Z.~Abu-Shaban \emph{et~al.}, ``{Near-field Localization with a Reconfigurable
  Intelligent Surface Acting as Lens},'' in \emph{IEEE Int. Conf. Commun.},
  2021, pp. 1--6.

\bibitem{EM_wavefront}
F.~Guidi \emph{et~al.}, ``{Radio Positioning With {EM} Processing of the
  Spherical Wavefront},'' \emph{IEEE Trans. Wireless Commun.}, vol.~20, no.~6,
  pp. 3571--3586, 2021.

\bibitem{rahal2021risenabled}
\BIBentryALTinterwordspacing
M.~Rahal \emph{et~al.}, ``{{RIS}-Enabled Localization Continuity Under
  Near-Field Conditions},'' 2021. [Online]. Available:
  \url{https://arxiv.org/abs/2109.11965}
\BIBentrySTDinterwordspacing

\bibitem{RIS_sumrate_2020}
H.~Guo \emph{et~al.}, ``{Weighted Sum-Rate Maximization for Reconfigurable
  Intelligent Surface Aided Wireless Networks},'' \emph{IEEE Trans. Wireless
  Commun.}, vol.~19, no.~5, pp. 3064--3076, 2020.

\bibitem{keykhosravi2021multiris}
\BIBentryALTinterwordspacing
K.~Keykhosravi \emph{et~al.}, ``{Multi-{RIS} Discrete-Phase Encoding for
  Interpath-Interference-Free Channel Estimation},'' 2021. [Online]. Available:
  \url{https://arxiv.org/abs/2106.07065}
\BIBentrySTDinterwordspacing

\bibitem{abeywickrama2020intelligent}
S.~Abeywickrama \emph{et~al.}, ``{Intelligent Reflecting Surface: Practical
  Phase Shift Model and Beamforming Optimization},'' \emph{IEEE Trans.
  Commun.}, vol.~68, no.~9, pp. 5849--5863, 2020.

\bibitem{RIS_phase_quantization_2021}
P.~Xu \emph{et~al.}, ``{Reconfigurable Intelligent Surfaces-Assisted
  Communications With Discrete Phase Shifts: How Many Quantization Levels Are
  Required to Achieve Full Diversity?}'' \emph{IEEE Wireless Commun. Lett.},
  vol.~10, no.~2, pp. 358--362, 2021.

\bibitem{RIS_coupling_2021}
G.~Gradoni \emph{et~al.}, ``{End-to-End Mutual Coupling Aware Communication
  Model for Reconfigurable Intelligent Surfaces: An Electromagnetic-Compliant
  Approach Based on Mutual Impedances},'' \emph{IEEE Wireless Commun. Lett.},
  vol.~10, no.~5, pp. 938--942, 2021.

\bibitem{RIS_loc_2021_TWC}
W.~Wang \emph{et~al.}, ``{Joint Beam Training and Positioning for Intelligent
  Reflecting Surfaces Assisted Millimeter Wave Communications},'' \emph{IEEE
  Trans. Wireless Commun.}, pp. 1--1, 2021.

\bibitem{Fortunati2017}
S.~Fortunati \emph{et~al.}, ``{Performance Bounds for Parameter Estimation
  under Misspecified Models: Fundamental Findings and Applications},''
  \emph{IEEE Signal Process. Mag.}, vol.~34, no.~6, pp. 142--157, 2017.

\bibitem{Fortunati2018Chapter4}
------, ``{Chapter 4: Parameter bounds under misspecified models for adaptive
  radar detection},'' in \emph{Academic Press Library in Signal Processing,
  Volume 7}, R.~Chellappa \emph{et~al.}, Eds.\hskip 1em plus 0.5em minus
  0.4em\relax Academic Press, 2018, pp. 197--252.

\bibitem{MCRB_TSP_2021}
T.~T. Le \emph{et~al.}, ``{Misspecified {Cramer-Rao} Bounds for Blind Channel
  Estimation under Channel Order Misspecification},'' \emph{IEEE Trans. Signal.
  Process.}, pp. 1--1, 2021.

\bibitem{MCRB_delay_ICASSP_2020}
F.~Roemer, ``{Misspecified {Cramer-Rao} Bound For Delay Estimation with a
  Mismatched Waveform: A Case Study},'' in \emph{IEEE Int. Conf. Acoustics,
  Speech Signal Process.}, 2020, pp. 5994--5998.

\bibitem{Ricmond2015MCRB}
C.~D. Richmond \emph{et~al.}, ``{Parameter Bounds on Estimation Accuracy Under
  Model Misspecification},'' \emph{IEEE Trans. Signal Process.}, vol.~63,
  no.~9, pp. 2263--2278, 2015.

\end{thebibliography}
	
\end{document}